\newtheorem{theorem}{Theorem}
\newtheorem{lemma}[theorem]{Lemma}
\newtheorem{observation}{Observation}
\newenvironment{proofof}[1]{\begin{proof}[Proof of #1]}{\end{proof}}
\newcommand{\vbp}{\text{\sc vbp}}
\newcommand{\opt}{\text{\sc Opt}}
\newcommand{\A}{\text{\sc A}}
\newcommand{\ccp}{\text{\sc ccp}}
\newcommand{\E}{\mathbb{E}\xspace}
\newcommand{\pr}{\mathbf{Pr}}
\title{An Asymptotic Lower Bound for Online Vector Bin Packing}
\author{Nikhil Bansal\thanks{CWI Amsterdam and TU Eindhoven, \texttt{N.Bansal@cwi.nl}. Supported by the ERC Consolidator Grant 617951 and the NWO VICI grant 639.023.812.}
\and Ilan Reuven Cohen\thanks{Jether Energy Ltd, Israel, \texttt{ilanrcohen@gmail.com}. This work was done when the author was a postdoc at CWI and supported by the ERC Consolidator Grant 617951.
}
}
\date{}
\begin{document}

\maketitle

\begin{abstract}
We consider the online vector bin packing problem where $n$ items specified by $d$-dimensional vectors
must be packed in the fewest number of identical $d$-dimensional bins.
Azar et al.~\cite{AzarCKS13} showed that for any online algorithm $A$, there exist instances $I$, such that $A(I)$, the number of bins used by $A$ to pack $I$, is $\Omega(d/\log^2 d)$ times $\opt(I)$, the minimal number of bins to pack $I$. However in those instances, $\opt(I)$ was only  $O(\log d)$, which left open the possibility of improved algorithms with better asymptotic competitive ratio when $\opt(I) \gg d$.
We rule this out by showing that for any arbitrary function $q(\cdot)$ and any randomized online algorithm $A$, there exist instances $I$ such that
$  \E[A(I)] \geq c\cdot d/\log^3d \cdot  \opt(I) + q(d)$, for some universal constant $c$.
\end{abstract}

\section{Introduction}
We consider the $d$-dimensional vector bin packing problem (\vbp), a natural and well-studied multi-dimensional generalization of the classic bin packing problem.
Here, we are given a collection $W = \{w_1, w_2,..., w_n\}$ of $n$ items, where each item corresponds to a vector $w_i \in [0,1]^d$. 
The goal is to pack these items feasibly into the fewest number of bins, where a subset 
$S \subseteq W$ of items is said to be packed feasibly in a bin if $\sum_{i\in S} w_i(j) \leq 1$, for each coordinate $j = 1,\ldots,d$.

Note that the case of $d=1$ corresponds to the extensively studied bin-packing problem, where each item (job) corresponds to a number in $[0,1]$
specifying its resource requirement, and the bins correspond to machines. 
In many applications however, jobs require multiple resources and can be modeled as multi-dimensional vectors; with a dimension for each resource. Rescaling the units of measurement so that each machine has unit capacity for each resource, $\vbp$ thus corresponds to finding a feasible assignment of jobs to the least number of machines without exceeding the available resources on any machine.
$\vbp$  has received a lot of attention, both theoretically and practically, especially due to  applications in virtual machine placement in cloud computing, e.g., \cite{panigrahy2011heuristics,zhang2010cloud}. We do not attempt to survey the various results on the problem, and only focus on the ones most relevant to our setting.

In this paper we focus on the online version of \vbp. Here, the vectors arrive one by one, and upon arrival the incoming vector must be assigned immediately and irrevocably, to some (possibly new) bin while maintaining the feasibility of the bin. 
For a randomized online algorithm $A$, given an instance $W$, let $A(W)$ denote the expected number of bins used by $A$, and let $\opt(W)$ denote the optimal number of bins in the optimum offline solution for $W$.
An online algorithm $\A$ is $\alpha$-competitive if for any instance $W$, $A(W) \leq \alpha \cdot \opt(W) + c$, where $c$ is some universal constant, independent of the problems parameters $n,d$.
Under this (strict) notion of competitive ratio, the problem is quite well-understood. Garey et al.~\cite{garey1976resource} showed that the First-Fit algorithm is $(d+0.7)$-competitive ~\cite{coffman1996approximation}.
On the other hand, using a well-known connection of $\vbp$ to graph coloring (that we shall see later), together with a classic result lower bound of Halld\'{o}rsson and Szegedy \cite{halldorsson1992lower} for online coloring, Azar et al.~\cite{AzarCKS13} showed that any randomized algorithm must a have competitive ratio of at least $\Omega(d/\log^2 d)$. They also show more refined and almost tight lower bounds as a function of the ratio of the bin size to the maximum item size.

\vspace{-2mm}

\paragraph{Asymptotic guarantees.} In many bin-packing problems however, it is often more insightful to consider asymptotic guarantees, where the additive error in the approximation (resp.~competitive) ratio for offline (resp.~online) problems can depend on the input parameters.
For example, for the classic bin-packing problem, strictly speaking, it is NP-Hard to find an approximation factor better than $3/2$, as it is hard to distinguish a solution value of $2$ versus $3$, as seen by a simple reduction from the Partition problem. But there exist much better asymptotic guarantees such as $(1+\epsilon) \opt + O(1/\epsilon^2)$, for all $\epsilon >0$, and $\opt + O(\log \opt)$ \cite{VL81, HR17}. These are not only better when $\opt$ is large, but also lead to very interesting algorithmic ideas and insights. 
Asymptotic guarantees for online (1-$d$) bin packing have also been studied extensively. While 
Ullman~\cite{princeton1971performance} proved that for any 
instance $W$, $\text{FirstFit}(W) \leq 1.7\, \opt(W) + 3$, several algorithms are known with improved asymptotic guarantees.  
Currently, the best such result is a $1.5783\cdot(1+\epsilon) \opt(W) + O(1/\epsilon)$ competitive algorithm for any $\epsilon >0$, due to Balog et al.~\cite{balogh2017new}, and the best asymptotic lower bound is $1.5403$~\cite{balogh2012new}.

Similarly, strictly speaking, $\vbp$ is hard to approximate within a factor of $d^{1-\epsilon}$ for any $\epsilon>0$, when $d$ is part of the input. This follows from the connection of $\vbp$ to graph coloring mentioned above and the classic hardness of approximation results for graph coloring \cite{Hastad01, Khot01}. However, and perhaps somewhat surprisingly, much better approximation guarantees of the form $O(\log d)\, \opt + O_d(1)$ are known, when an additive error and running time exponential in $d$ is allowed \cite{chekuri1999multi, BCS09, BEK16}.
Such bounds are quite useful since 
$d$ is a small constant and $\opt$ is large
in typical  practical applications. 

As running time is a not a consideration for online algorithms, these results also suggest the possibility of an {\em asymptotic} online guarantee of the form $O(\log d)\, \opt + O_d(1)$, or even $O(1)\, \opt + O_d(1)$. Moreover, 
the lower bound results of Azar et al.~\cite{AzarCKS13} do not rule out such improved asymptotic bounds either, as the value of $\opt$ in those instances is only a function of $d$ (in fact $\opt$ is only $O(\log d)$).
The dependence of $\opt$ on $d$ also seems inherent in these constructions as they were based on reductions from coloring (we discuss this more later).
For these reasons, it was believed 
that such improved asymptotic bounds might exist,
and there has been interest in finding such an algorithm. This was also our initial motivation for looking at online $\vbp$.

\vspace{-2mm}
\paragraph{Our result.} 
We show that unfortunately, no online algorithms with substantially better bounds than $\tilde{O}(d)$, where $\tilde{O}(\cdot)$ suppresses poly-logarithmic factors, exist
even if asymptotic guarantees are allowed. More formally, we show the following. 
\begin{theorem} 
\label{thm:reltvbp}
Any randomized online algorithm for $\vbp$ must have an asymptotic competitive ratio of at least $\Omega(d/\log^3 d)$.
In particular, given any arbitrary function $q(\cdot)$, there exists instances $I$, such that $ \E[ A(I)] \geq  c(d/\log^3d) \cdot  \opt(I) + q(d)$ for any randomized online algorithm $A$, for some universal constant $c$.
\end{theorem}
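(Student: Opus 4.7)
The strategy is to amplify the non-asymptotic lower bound of Azar et al.~\cite{AzarCKS13}, which gives competitive ratio $\Omega(d/\log^2 d)$ on instances with $\opt=O(\log d)$, into an asymptotic statement with $\opt$ arbitrarily large. The natural route is to chain many independent copies of a hard ``phase'' so that $\opt$ grows linearly in the number of phases $T$, while any online algorithm is charged the per-phase lower bound in every phase; by choosing $T$ large enough (depending on $q(\cdot)$) the additive term $q(d)$ is absorbed. Since the algorithm is randomized, I would invoke Yao's minimax principle and reason about the expected cost of an arbitrary deterministic online algorithm against a suitable input distribution.

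\paragraph{Building a hard phase and chaining.} I would start from the Azar et al.~instance---a reduction from online coloring of random graphs combined with the Halld\'orsson--Szegedy lower bound---and modify the vector sizes so that each bin in any feasible packing of a single phase is saturated after receiving only $O(\mathrm{polylog}\, d)$ items. This modification is expected to cost at most an extra $\log d$ factor in the per-phase competitive ratio, which is the natural source of the $\log^3 d$ in the final statement rather than $\log^2 d$. Having obtained such a phase, with per-phase optimum $O(\log d)$ bins and expected online cost $\Omega(d/\log^2 d)$ times the per-phase optimum, I would feed $T$ phases sequentially. The phases reuse the same $d$ coordinates, but the ``support'' on which each phase's items carry mass is rotated/randomly shifted so that, with high probability, no bin can simultaneously contain items from two different phases. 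This decoupling yields $\opt(I) \leq T\cdot O(\log d)+O(1)$, while the expected online cost is at least $T$ times the per-phase lower bound.

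\paragraph{Closing the bound and the main obstacle.} Dividing, the ratio of expected online cost to $\opt$ is $\Omega(d/\log^3 d)$ uniformly in $T$; picking $T$ sufficiently large in terms of $d$ and $q$ ensures $\E[A(I)] \geq (c\,d/\log^3 d)\cdot \opt(I)+q(d)$. The main obstacle is the decoupling step: one must argue that (i) any bin opened by the algorithm in an early phase is with high probability unusable in later phases, even if the algorithm deliberately keeps it sparse, and (ii) the offline optimum cannot profit by merging items from different phases into shared bins. Item sizes have to be large enough to force cross-phase saturation yet small enough that the single-phase optimum stays $O(\log d)$; this delicate balance is what loses the extra $\log d$ factor and prevents an immediate $\Omega(d/\log^2 d)$ asymptotic bound. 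A secondary care point is verifying that Yao's principle is applied correctly against adaptive adversaries, which is routine once the distribution is oblivious.
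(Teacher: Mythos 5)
There is a genuine gap in the decoupling step, and it is not a technicality that more care would fix. Your plan needs two things simultaneously: (a) each phase has $n=d$ items, per-phase $\opt=O(\log d)$, and per-phase online cost $\Omega(d/\mathrm{polylog}\,d)$; and (b) no bin opened in an earlier phase is usable in a later phase. Point (a) forces the optimal bins to hold $\Theta(d/\log d)$ items each, so your proposed fix of "saturating each bin after $O(\mathrm{polylog}\,d)$ items" is internally inconsistent with keeping $\opt=O(\log d)$ --- if bins saturate that fast, $\opt$ per phase jumps to $d/\mathrm{polylog}\,d$ and the ratio collapses to $\mathrm{polylog}\,d$. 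Without saturation, the online algorithm ends each phase with $\Omega(d/\log d)$ mostly near-empty bins, and any mechanism that makes \emph{every} item of a later phase conflict with \emph{every} nonempty bin of an earlier phase needs either item values above $1/2$ in shared coordinates (which blows up $\opt$) or a number of coordinates growing with the number of phases $T$ (which contradicts $d$ being fixed while $\opt\to\infty$). Since the Halld\'orsson--Szegedy lower bound is proved against an algorithm starting from an empty configuration, you cannot simply charge each phase its per-phase cost when the algorithm enters it with a large pool of reusable bins; this is exactly the obstruction that makes naive amplification fail, and your proposal names it but does not overcome it. (Also, item (ii) of your obstacle list is a non-issue: if the offline optimum merges items across phases, $\opt$ only decreases, which helps the lower bound.)

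The paper takes a structurally different route that sidesteps bin reuse entirely. Instead of $T$ independent phases, it introduces a \emph{copies-coloring} problem: each vertex $v$ of an HS graph $G$ on $n=d$ vertices arrives with $t$ mutually adjacent copies, so $G^t$ satisfies $\chi(G^t)\geq t\,\chi_f(G)$ and $\opt$ scales linearly in $t$ while $d=n$ stays fixed (the copies of a single vertex automatically exclude one another from a bin, so no ad hoc decoupling gadget is needed). The online lower bound for $G^t$ is then obtained not by summing per-phase costs but by a simulation argument: an $(\alpha,q(n))$-competitive algorithm for $G^t$ with $t=\Theta(q(n)\log n)$ yields an $O(\alpha\log n)$-competitive online coloring of $G$ by independently sampling each color the algorithm uses into a pool with probability $p=2\log n/t$ and coloring $v$ with any pooled color among those assigned to its $t$ copies; this fails with probability at most $1/n$ per vertex. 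The extra $\log$ factor in $\Omega(d/\log^3 d)$ comes from this sampling probability, not from bin saturation as you conjecture. You would need to replace your chaining construction with an idea of this kind --- one that ties the large-$\opt$ instance back to the original HS bound via a reduction rather than via independent repetition.
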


The main idea behind Theorem \ref{thm:reltvbp} is an approach to {\em amplify} the instances used in \cite{AzarCKS13}. The lower bound in \cite{AzarCKS13} is based on a reduction from the online coloring problem to $\vbp$, where the number of items $n$ is equal to the number of dimensions $d$.
However, note that any instance where $d$ is some function of $n$, cannot be used to show the asymptotic lower bounds in Theorem \ref{thm:reltvbp}, as already the trivial packing using $n$ bins is already $1$-competitive asymptotically as $n = q(d)$ for some $q$. In particular, to show such a lower bound one needs a family of instances where the value $\opt$ has no dependence $d$. As we discuss in Section \ref{sec:prel} below, such a dependence seems inherent in direct reduction via online coloring.

To get around this problem, instead of reducing $\vbp$ to coloring, we consider a different \emph{copies-coloring problem} ($\ccp$) that lies between the coloring problem and the fractional coloring problem. 
Even though this problem is almost as difficult as online coloring (due to the close relation between chromatic and fractional chromatic numbers \cite{lovasz75}), the flexibility in CCP instances
will allow us to construct $\vbp$ instances where 
$\opt$ can be arbitrarily large for any fixed $d$. 
Our lower bound instances will be similar to those used in \cite{halldorsson1992lower, AzarCKS13}, but modified to work for the copies-coloring problem. We will not work with the notion of fractional chromatic numbers directly as it is unclear how to do this in the online setting, but our analysis will implicitly use the ideas of \cite{lovasz75}. 

\section{Preliminaries}
\label{sec:prel}
We now give the relevant notation and a more formal description of previous ideas.
We first describe a simple offline reduction from graph coloring to $\vbp$, and then discuss the online variant of graph coloring that arises from online $\vbp$.

Recall that in the coloring problem, given an undirected graph $G(V,E)$, the goal is to color each vertex of $G$ using the least number of colors, so that any two adjacent vertices have different color.
Call a $k$-coloring $g: V \rightarrow [k]$ of $G$ feasible if $g(u) \neq g(v)$ for all $(v,u)\in E$. As
the set of vertices assigned a particular color form an independent set, a $k$-coloring is also equivalently viewed as covering of the vertices $V$ by $k$ independent sets. In this case we say that the chromatic number $\chi(G)$ is at most $k$.

\noindent{\bf Reduction.} Given a graph $G$ on $n=|V|$ vertices, consider the following $\vbp$ instance $W$ with $n$ items and $d=n$ dimensions.
For each vertex $v_i$ we create a vector $w_i$ with $w_i(i) = 1$ and $w_i(j) = 1/n$, if $(v_i,v_j)\in E$ and $j<i$, and  $w_i(j)=0$ otherwise.
Here $w_i(k)$ denotes the $k$-th coordinate of $w_i$. 
The following is an easy observation.
\begin{observation}\label{obv:reduction}
For any subset $S\subseteq [n]$, the set of items $W_S =\{w_{i}:i\in S\}$ can be feasibly packed in a bin iff 
  the set of vertices  $V_S=\{ v_{i}:i\in S\}$ is an independent set in $G$.
\end{observation}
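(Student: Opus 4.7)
The plan is to prove both directions of the iff by a direct coordinate-wise inspection of the construction; no heavy machinery is needed.

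For the ``if'' direction, I assume $V_S$ is independent and fix an arbitrary coordinate $k \in [n]$ in order to bound $\sum_{i \in S} w_i(k) \le 1$. If $k \notin S$, every nonzero contribution to the sum equals $1/n$, and since there are at most $|S| \le n-1$ such terms, the sum is strictly below $1$. If $k \in S$, then $w_k(k) = 1$ already saturates the coordinate, and any other $i \in S$ can contribute $1/n$ to this coordinate only when $(v_i, v_k) \in E$; independence of $V_S$ rules this out, so the total equals exactly $1$. Checking this for every $k$ verifies feasibility of $W_S$.

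For the ``only if'' direction I would contrapose: given an edge $(v_i, v_j)$ with $i,j \in S$, assume without loss of generality that $j < i$. By construction $w_j(j) = 1$ and $w_i(j) = 1/n$, so the $j$-th coordinate of the bin carries mass at least $1 + 1/n > 1$, witnessing infeasibility. The only subtle point in the whole argument is the ``$j<i$'' asymmetry built into the definition of $w_i$: in the infeasibility direction one has to identify the smaller-index endpoint of the violating edge as the coordinate where the capacity is breached. This is the only step that needs a moment of care, and there is no real obstacle otherwise.
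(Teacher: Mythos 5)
Your proof is correct and follows essentially the same route as the paper's: a direct coordinate-by-coordinate check for feasibility under independence, and exhibiting the overloaded coordinate (the smaller-indexed endpoint of a violating edge) for the converse. The only cosmetic difference is that you bound the $k \notin S$ case by $(n-1)/n < 1$ where the paper settles for $n \cdot (1/n) \le 1$; both suffice.
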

\begin{proof}
Suppose, $(v_i,v_j) \in E$ and $i<j$.
As $w_i(i)=1$ and $w_j(i)>0$, the cannot lie in the same bin. Hence the items that can be packed in a bin correspond to independent sets in $G$. Conversely, given any independent set $V_S \subset V$ in $G$, the corresponding items $W_S$ can be packed feasibly in a bin as 
for any coordinate $i \in S$, there is exactly one item $w_i$ with $w_i(i)=1$ and no item $w_j$ with $w_j(i) =1/n$. Moreover, as there are at most $n$ items, for any coordinate $i \notin S$, there can be at most $|S|\leq n$ items with value $w_j(i) =1/n$. So the constraints for every coordinate $i \in [n]$ are satisfied.
\end{proof}

Given the $n^{1-\epsilon}$ hardness of approximation result \cite{Hastad01} for coloring for any $\epsilon >0$, this directly implies a $d^{1-\epsilon}$ hardness for $\vbp$ when $d$ is part of the input. Let us also note that in this reduction the number of dimensions $d$ in the resulting VBP instance, must necessarily be some function of $n$. In particular, even for bounded degree graphs of degree $d'$, graph coloring is known to be $d'/\text{polylog}(d')$ hard to approximate \cite{Chan13}. While if the number of dimensions $d$ in $\vbp$ is constant, then it has a polynomial time algorithm that uses only $O(\log d) \opt + O_d(1)$ bins, giving essentially an $O(\log d)$ approximation.

\noindent{\bf Online coloring and HS instances.}
Consider the following {\em online} graph coloring problem. The input is a sequence of vertices, and upon the arrival of a vertex all its neighbors to the previously arrived vertices are revealed, and the algorithm must immediately and irrevocably assign this vertex a feasible color.
This problem was studied by Halld\'{o}rsson and Szegedy~\cite{halldorsson1992lower}, who showed that for any $n$ there exists a adversarial strategy that produces distribution over graphs with $n$ vertices and chromatic number $O(\log n)$, but for which any randomized online  algorithm requires at least $\Omega(n/\log n)$ colors. We will call these {\em HS instances}.
Note that the reduction from online coloring described above, produces an online $\vbp$ instance and works for any $d$ (albeit with $n=d$). In particular, this implies  a distribution over instances where $\opt=O(\log d)$ but any online algorithm uses $\Omega(d/\log d)$ bins.

As we will be interested in asymptotic guarantees, we say that for any function $q$, a randomized online algorithm $A$ for $\vbp$  is $(\alpha,q(d))$ competitive, if for any instance $W$ of $\vbp$, it holds that \[\E[A(W)] \leq \alpha \opt(W)  + q(d).\]

\noindent{\bf Fractional coloring.}
We recall the notion of fractional coloring originally introduced by Lov\'{a}sz \cite{lovasz75}. Given a graph $G$, let $\cal{S}$ denote the collection of all its independent sets. Then the fractional chromatic number, denoted $\chi_f(G)$, is the smallest number of colors needed to fractionally cover each vertex by independent sets to an extent of at least $1$. That is, it is the optimum value of the following exponential size linear program.
\[ \min \sum_{S \in \cal{S}} x_S \qquad 
\text{s.t.} \quad \sum_{S \ni v} x_{S} \geq 1 \quad \forall v \in V,  \qquad 
 x_S \in [0,1] \quad  \forall S \in \cal{S}\]
It is well-known that for any graph $G$, the fractional chromatic number is not too far from that chromatic number. In particular, $\chi_f(G) \leq  \chi(G) \leq (1+\ln n)\, \chi_f(G)$ \cite{lovasz75}. 
\section{The asymptotic lower bound}
In this section we prove Theorem \ref{thm:reltvbp}. We begin by defining the copies-coloring problem and list some of its basic properties.
In \S\ref{sec:lb-ccp}  we give the main argument which proves an asymptotic lower bound on the competitive ratio for online copies-coloring, and finally in \S\ref{sec:lb-vbp} we give the result for online $\vbp$ by giving a reduction from copies-coloring.
  
\subsection{The copies coloring problem}
\label{sec:ccp}
In an (offline) instance of the copies coloring problem, denoted by $G^t(V,E)$, the input is an underlying graph $G(V,E)$ and for each vertex $v$, there are $t$ copies $(v,1),\ldots,(v,t)$. 
The goal is to find a coloring of the copies, using the least number of colors, so that (i) each copy of a vertex is assigned a different color, and (ii) any two copies $(u,i)$ and $(v,j)$ are assigned a different color if $(u,v) \in E$. 
In other words, $G^t$ can be viewed as being obtained from $G$, by replacing each vertex $v$ of $G$ by the complete graph $K_t$, and an edge $(u,v)$ in $G$, by the complete bipartite graph $K_{t,t}$.

Given a feasible coloring $g$ of $G^t(V,E)$, for each color $r$, let 
$H_g(r) = \{v \in V: \exists i\in [t] \text{ such that } g(v,i)=r\}$ denote the set of vertices of $V$ for which some (unique) copy is assigned the color $r$. Note that $H_g(r)$ is defined as subset of the vertices of $G$ (and not of vertex-copies in $G^t$). This distinction will be crucial later.
Moreover, $H_g(r)$ is an independent set in $G$.

Let $\chi(G^t)$ denote the chromatic number of $G^t$.  We have the following simple lemma.
\begin{lemma} For any graph $G$, and $G^t$ obtained from $G$ as described above,
	\[\chi_{f}(G) \leq \chi(G^t)/t  \leq \chi(G).\]
\end{lemma}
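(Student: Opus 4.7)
The plan is to prove the two inequalities separately, each by an explicit construction.

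For the upper bound $\chi(G^t)/t \leq \chi(G)$, I would start from an optimal proper coloring of $G$ with $k = \chi(G)$ colors and build a coloring of $G^t$ with $kt$ colors. Partition a palette of $kt$ colors into $k$ disjoint blocks $B_1,\ldots,B_k$, each of size $t$. If a vertex $v$ receives color $c \in [k]$ in the given coloring of $G$, then assign the $t$ copies $(v,1),\ldots,(v,t)$ the $t$ distinct colors of $B_c$ (in any bijection). Distinct copies of the same vertex automatically receive distinct colors because they fill the block $B_c$ bijectively. If $(u,v) \in E$ and $u$ has color $c$ while $v$ has color $c' \neq c$ in $G$, then the copies of $u$ use only colors in $B_c$ and the copies of $v$ only colors in $B_{c'}$, which are disjoint. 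Hence $\chi(G^t) \leq kt$.

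For the lower bound $\chi_f(G) \leq \chi(G^t)/t$, I would turn a proper coloring $g$ of $G^t$ using $r = \chi(G^t)$ colors into a fractional coloring of $G$ of weight $r/t$. For each color $c \in [r]$, consider the set $H_g(c) = \{v \in V : \exists i \in [t] \text{ with } g(v,i) = c\}$, which is an independent set in $G$ (this is observed in the text just before the lemma, since any edge of $G$ becomes a complete bipartite subgraph in $G^t$). Set the fractional weight $x_{H_g(c)} = 1/t$ for each color $c$, and $x_S = 0$ for every other independent set. Any fixed $v \in V$ has $t$ copies, which all receive distinct colors in $g$, so $v$ lies in $H_g(c)$ for exactly $t$ distinct values of $c$. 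Therefore $\sum_{S \ni v} x_S = t \cdot (1/t) = 1$, showing feasibility, and the total weight is $r \cdot (1/t) = \chi(G^t)/t$, so $\chi_f(G) \leq \chi(G^t)/t$.

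Neither direction looks like a real obstacle; both are one-line constructions once the correct bookkeeping is set up. The only point that requires a moment of care is the lower bound: one must remember that the $t$ copies of a single vertex must themselves receive distinct colors (this is part of the copies-coloring constraint), which is exactly what guarantees that each $v$ is covered to extent precisely $1$ rather than less. Once that is noted, no calculation beyond the two counting identities above is needed.
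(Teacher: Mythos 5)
Your proof is correct and follows essentially the same route as the paper: the upper bound via a fresh block of $t$ colors per color class of $G$ (the paper writes this as $f(v,i)=(g(v),i)$), and the lower bound by assigning weight $1/t$ to each set $H_g(c)$ and using that the $t$ copies of a vertex receive distinct colors. No gaps.
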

\begin{proof}
We first show that  $\chi(G^t) \leq t \chi(G)$. This follows directly as any coloring $g$ of $G$ with $k =\chi(G)$ colors can be extended
to a feasible coloring $f$ of $G^t$ using $kt$ colors, by using a fresh set of colors for each of the $t$ copies of $G$, i.e.~consider the coloring $f(v,i) = (g(v),i)$.
	
We now show that $\chi_f(G) \leq \chi(G^t)/t$.
Let $k = |\chi(G^t)|$ and 
let $f$ be some coloring of $G^t$ using $k$ colors.
For $i \in [k]$, consider the independent set $H_f(i)$ of $G$ as defined above.
Consider the fractional coloring $x$ for $G$ with value $k/t$ 
	obtained by setting 
	$x_S = 1/t$ for each of the $k$ sets $S= H_f(i)$ for $i\in [k]$. 
	This fractional coloring is valid as for each $v\in V$, its $t$ copies use different colors, 
	and hence $v$ lies in $t$ independent sets $H_f(i)$.
\end{proof}

The online version of copies coloring problem is defined as follows. There are $n$ time steps, at each step a vertex $v \in G$ arrives,
and reveals all its edges to previously arrived vertices in $G$. At this step,
  $t$ copies $(v,1),\ldots,(v,t)$ of $v$ are created in $G^t$ and all the edges between these copies and the previously created copies are added depending on the neighbors of $v$ in $G$. 
Each one of these $t$ copies of $v$ must be immediately and irrevocably assigned a feasible color.

\subsection{Asymptotic lower bound for online copies coloring}
\label{sec:lb-ccp}
Let $G$ be a graph on $n$ vertices, and $G^t$ be the corresponding graph with copies as defined above.
Let $n = |V(G)|$. We say that an online algorithm  for $G^t$ is $(\alpha,q(n))$-competitive, for some function $q$, if 
the number of colors used by it most $\alpha \cdot \chi(G^t) + q(n)$.

Note that even though $G^t$ has $n \cdot t$ vertices, the additive term  $q(n)$ in the definition above only depends on $n$.
This distinction between $t$ and $n$ will be crucial in our reductions and allow us to translate the non-asymptotic lower bounds for coloring $G$ to asymptotic lower bounds for coloring $G^t$. In particular, we show the following.

\begin{theorem}
\label{thm:reltcolor}
For any function $q$, 
there exists a family of instances $G^t$, such that there is there is no randomized 
$(o(n/\log^3 n), q(n))$-competitive algorithm for online copies coloring, where $n$ is the number of vertices in the underlying graph $G$.   
\end{theorem}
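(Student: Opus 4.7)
The plan is to reduce online coloring of an HS-style graph $G$ to online copies-coloring of $G^t$ via a randomized color-sparsification step, which is effectively an online analogue of Lov\'asz's bound $\chi(G) \leq (1+\ln n)\chi_f(G)$. I argue by contradiction: suppose a randomized online copies-coloring algorithm $A$ is $(\alpha(n),q(n))$-competitive with $\alpha(n)=o(n/\log^3 n)$. Taking $G$ to be an HS instance on $n$ vertices, so that $\chi(G)=O(\log n)$ and hence (by the lemma above) $\chi(G^t)\leq t\chi(G)=O(t\log n)$, the hypothesis on $A$ forces
\[
\E[\text{colors}(A)] \leq \alpha(n)\cdot O(t\log n)+q(n) = o(tn/\log^2 n)+q(n).
\]

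Next I build a randomized online coloring algorithm $A'$ for $G$ that uses $A$ as a black box. The algorithm $A'$ draws a random subset $R$ of colors by including every color introduced by $A$ independently with probability $p=2\log n/t$. When a vertex $v$ arrives in $G$, $A'$ feeds the $t$ copies $(v,1),\dots,(v,t)$ into its simulated instance of $A$ and obtains the set $C_v$ of $t$ distinct colors assigned to these copies. If $C_v\cap R\neq\emptyset$, $A'$ colors $v$ with the smallest element of $C_v\cap R$; otherwise $A'$ assigns $v$ a fresh private color, unique to $v$. Validity on $G$ is immediate from the copies-coloring guarantee $C_u\cap C_v=\emptyset$ for every edge $(u,v)\in E(G)$: if both endpoints are covered, their chosen colors in $R$ differ; private colors are distinct from each other and from all colors in $R$.

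For each vertex $v$, $\pr[C_v\cap R=\emptyset]\leq (1-p)^t\leq e^{-pt}=1/n^2$, so the expected number of uncovered vertices is at most $1/n$, while $\E[|R|]=p\cdot \E[\text{colors}(A)]$. Therefore
\[
\E[\text{colors}(A')] \leq p\cdot \E[\text{colors}(A)] + 1/n \leq \frac{2\log n}{t}\bigl(o(tn/\log^2 n)+q(n)\bigr)+\frac{1}{n} = o(n/\log n)+O\!\left(\frac{q(n)\log n}{t}\right).
\]
Choosing $t$ large enough relative to $q(n)$ (say $t\geq q(n)\log^2 n$) makes the second term $o(1)$, so $\E[\text{colors}(A')]=o(n/\log n)$, contradicting the HS lower bound of $\Omega(n/\log n)$ for any randomized online coloring algorithm on HS instances.

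The main obstacle is engineering the sparsification step to be simultaneously online, validity-preserving, and palette-shrinking by a factor $\Omega(t/\log n)$ in expectation. Validity comes for free from the disjointness $C_u\cap C_v=\emptyset$ across every edge. The quantitatively key fact is that an i.i.d.\ subsampling of the color universe at rate $\Theta(\log n/t)$ hits every $t$-subset with probability $1-1/n^2$; this is precisely the rounding ratio underlying the offline proof of $\chi(G)\leq(1+\ln n)\chi_f(G)$ applied to the fractional coloring of value $\chi(G^t)/t$ induced implicitly by $A$'s copies-coloring.
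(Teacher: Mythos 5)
Your proposal is correct and follows essentially the same route as the paper: simulate the copies-coloring algorithm on $G^t$ built from an HS instance, subsample its colors online at rate $p=2\log n/t$ (with $t$ chosen large relative to $q(n)$ so the additive term washes out), use the disjointness of color sets across edges for validity, and contradict the $\Omega(n/\log n)$ HS lower bound. The only cosmetic difference is that you bound the expected number of ``fresh private colors'' directly by $1/n$ rather than via $n\cdot\pr[\text{fail}]$ as the paper does, which changes nothing.
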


Consider the instance $G$ obtained from the HS graphs on $n$ vertices 
\cite{halldorsson1992lower}. As stated in Section \ref{sec:prel}, these are $O(\log n)$ colorable, and there is an online realization of the graph, in for which any online algorithm uses $\Omega(n/\log n)$ colors in expectation, and in particular implies an $\Omega(n/\log^2n)$ lower bound for online coloring.

So to prove Theorem \ref{thm:reltcolor}, it suffices to show the following.
\begin{lemma}
    \label{lem:maincolor}
For any function $q$,
if there exists an $(\alpha, q(n))$-competitive algorithm for the online copies coloring problem on instances $G^t$ for some $t=\Omega(q(n) \log n)$, then there exists a $3 \log n \cdot  \alpha$-competitive for the online coloring problem on the underlying instances $G$.  
\end{lemma}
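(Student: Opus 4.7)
The plan is to use the assumed CCP algorithm $B$ to effectively simulate $t$ copies of each arriving vertex and then collapse the resulting $t$-fold coloring into a proper coloring of $G$ via a random color filter sampled up-front. This is essentially an online implementation of Lov\'asz's classical randomized rounding argument that shows $\chi(G) \leq O(\log n)\,\chi_{f}(G)$. The key structural fact used throughout is that in any feasible coloring of $G^t$, each vertex $v \in V(G)$ has a set $C(v)$ of exactly $t$ colors assigned to its copies (since the copies of $v$ form a $K_t$), and $C(u) \cap C(v) = \emptyset$ whenever $(u,v) \in E(G)$ (since the corresponding copies form a $K_{t,t}$). In particular, choosing a single color from $C(v)$ for each $v$ already yields a proper coloring of $G$.

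Concretely, set $t := C \cdot q(n) \log n$ for a sufficiently large absolute constant $C$, and set $p := (2\ln n)/t$. The online algorithm $A$ for coloring $G$ does the following. Before any vertex arrives, it samples a random set $S \subseteq \mathbb{N}$ by including each positive integer in $S$ independently with probability $p$. When a vertex $v$ arrives for $A$, it presents the $t$ copies $(v,1),\ldots,(v,t)$ sequentially to $B$, with the edges in $G^t$ induced from the edges of $v$ to previously arrived vertices in $G$. Let $C(v)$ be the set of $t$ colors $B$ returns. If $C(v) \cap S \neq \emptyset$, color $v$ with the smallest element of $C(v) \cap S$; otherwise color $v$ with a \emph{fresh} color drawn from a disjoint reserve palette that $B$ never uses. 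Feasibility is immediate from the two structural facts above together with the disjointness of the reserve palette.

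For the color count, observe that $A$ uses at most $|S| + F$ colors, where $F$ is the number of vertices that triggered a fresh color. Let $K$ be the (random) number of colors $B$ uses; by hypothesis and the bound $\chi(G^t) \leq t\chi(G)$ proved earlier, $\E[K] \leq \alpha\chi(G^t) + q(n) \leq \alpha t\chi(G) + q(n)$. Hence $\E[|S|] \leq p \cdot \E[K] \leq 2\alpha\chi(G)\ln n + (2q(n)\ln n)/t$, and the second term is $O(1)$ by our choice of $t$. For the fresh colors, conditional on $B$'s internal randomness, $\Pr[C(v) \cap S = \emptyset] = (1-p)^t \leq e^{-pt} = 1/n^2$, so $\E[F] \leq n \cdot (1/n^2) = 1/n$. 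Summing these, $\E[A(G)] \leq 2\alpha\chi(G)\ln n + O(1) \leq 3\alpha \log n \cdot \chi(G) + O(1)$, which gives the claimed $(3\alpha\log n)$-competitive ratio.

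The main obstacle, and the reason one needs $t = \Omega(q(n)\log n)$, is to ensure that the additive slack $q(n)$ in $B$'s asymptotic guarantee does not pollute the multiplicative ratio after dividing by $t$: the density of $C(v)$ in $B$'s palette must be within a constant factor of $1/\chi(G)$ for the sample $S$ to be small enough. Everything else is a clean online port of the integer-vs-fractional coloring gap of Lov\'asz, made feasible in the online model only because $S$ is sampled from $A$'s own randomness and is independent of the instance.
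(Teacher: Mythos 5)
Your proposal is correct and follows essentially the same approach as the paper: simulate the copies-coloring algorithm on $t$ copies per vertex, subsample its colors independently with probability $p = \Theta((\log n)/t)$, and fall back to a reserve color on the low-probability event that no sampled color hits $C(v)$. The only cosmetic difference is that you sample the color filter up-front over all potential colors while the paper samples lazily as each new color appears; the analysis and the resulting bound $2\alpha \chi(G)\log n + O(1)$ are the same.
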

In particular, note that our reduction loses a factor of $O(\log n)$ in $\alpha$.

We now focus on proving Lemma \ref{lem:maincolor}.
Fix some function $q$, and
let $A$  be  some $(\alpha,q(n))$-competitive algorithm for the online copies coloring problem.
We will use $A$ to obtain another randomized algorithm $B$ for an online coloring of $G$.
In particular, algorithm $B$ on instance $G = G(V,E)$ simulates $A$ on $G^t$ with $t = q(n) \log n$ copies of each vertex, and 
{\em tries} to color $V$ using some (random) subset of the colors used by~$A$.
	 
We describe the algorithm formally in the figure below, and use the following notation.
Let $v_1,\ldots,v_n$ denote the online sequence of the vertices of $G$. At step $i \in [n]$,  
let $A(v_i,j)$ be the color  used by $A$ for the $j$-th copy of $v_i$,
and let $R^i(A)$ denote the set of all the colors used by $A$ so far until the end of step $i$. 
Let $B(v_i)$ denote the color used by $B$ for $v_i$. We also have special set of $n$ colors $\{\overline{1},\ldots,\overline{n}\}$, that are not used by $A$, and are reserved for use by $B$, in the occasional event that it fails.  

Upon the arrival of a vertex $v_i$, $B$ gives $t$ copies of $v_i$ to algorithm $A$. Then, for each new color $r$ used by $A$,
$B$ adds $r$ to its {\em pool} of colors with probability $p = 2 \cdot \log(n)/t$. If $A$ colors some copy of $v_i$ using some color $r$, where $r$ is in the pool of $B$, then $B$ colors $v_i$ with $r$.  Otherwise, $B$ {\em fails} at step $i$ and uses a brand new color for $v_i$.
	 
\begin{algorithm}[H]
\AlgoDontDisplayBlockMarkers\SetAlgoNoEnd\SetAlgoNoLine
\DontPrintSemicolon 
\KwIn{An online graph $G(V,E)$, an algorithm $A$ for copies coloring, and parameters $p,t$.}
\KwOut{A feasible coloring of $G(V,E)$}
Init $B$\;
\For{step $i \gets 1$ \textbf{to} $n$} 
{
color $t$ copies of $v_i$ using $A$\;
\For{$r \in R^i(A)\setminus R^{i-1}(A)$} 
    {
        with probability $p$ add $r$ to  $R^i(B)$
    }
 $P_i \leftarrow \{r\in R^i(B): \exists j\in[t], A(v_i,j)=r\}$\; 
  \If {$P_i \neq \emptyset$}
  {
    set $B(v_i) = r'$, for some $r' \in P_i$\;
  }
  \Else
  {
    declare {\em fail} and set $B(v_i) =\overline{i}$
  }
}
\caption{Online coloring algorithm}
\label{algo:max}
\end{algorithm}

The following simple observations show the correctness of $B$ and the bound the number of colors used.
\begin{observation}
$B$ produces a feasible coloring of $G$.
\end{observation}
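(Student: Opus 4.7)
The plan is to prove feasibility edge-by-edge: fix any edge $(v_i, v_j) \in E$ with $i < j$ and show that $B(v_i) \neq B(v_j)$. The colors that $B$ ever assigns fall into two disjoint classes: the "reserved" colors $\{\overline{1}, \ldots, \overline{n}\}$, which by construction are disjoint from the color palette of $A$, and colors drawn from $R^n(A)$, which is the palette used by $A$ on $G^t$. Since these two classes are disjoint and the reserved colors are pairwise distinct across steps, it suffices to handle the case where both $B(v_i)$ and $B(v_j)$ lie in $R^n(A)$.

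I would argue this mixed-class case first as a quick warm-up: if $B(v_i) = \overline{i}$ (failure at step $i$), then $\overline{i}$ is never used by $A$ and is distinct from $\overline{j}$, so it cannot equal $B(v_j)$ regardless of whether $B$ failed at step $j$. Symmetric reasoning handles the case $B(v_j) = \overline{j}$. The only remaining case is the "success/success" case, where $B(v_i) = r_i \in P_i$ and $B(v_j) = r_j \in P_j$.

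Now comes the main (and essentially only nontrivial) step. By the definition of $P_i$ and $P_j$, there exist copies $(v_i, k_i)$ and $(v_j, k_j)$ such that $A(v_i, k_i) = r_i$ and $A(v_j, k_j) = r_j$. Because $(v_i, v_j) \in E$, the construction of $G^t$ places a complete bipartite graph $K_{t,t}$ between the copies of $v_i$ and the copies of $v_j$; in particular $(v_i, k_i)$ and $(v_j, k_j)$ are adjacent in $G^t$. Since $A$ produces a feasible coloring of $G^t$ online, it must assign these two adjacent copies distinct colors, so $r_i \neq r_j$ and hence $B(v_i) \neq B(v_j)$.

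I do not foresee a real obstacle here; the argument is purely structural and hinges on exactly one fact about $A$ (that its coloring of $G^t$ is feasible) together with the bookkeeping that reserved colors are fresh and that $B$ only ever assigns a non-reserved color to $v_i$ if that color actually appeared on some copy of $v_i$ under $A$. The quantitative and probabilistic content of the reduction, namely that failures are rare and the pool is small, is not needed for feasibility and will be used later to bound the number of colors $B$ uses.
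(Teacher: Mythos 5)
Your proof is correct and rests on exactly the same facts as the paper's: that $B$ assigns a non-reserved color $r$ to $v_i$ only when $A$ colored some copy of $v_i$ with $r$, that adjacency in $G$ lifts to adjacency of copies in $G^t$, and that the reserved colors are fresh and pairwise distinct. The paper merely organizes the argument per color class (each $H_B(r)$ is contained in an independent set $H_A(r)$ of $G$, or is a singleton) rather than per edge, which is the same argument in dual form.
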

\begin{proof}
For a color $r$, let $H_A(r)$ denote the set of vertices in $G$ for which some copy is colored $r$ by $A$, and note that $H_A(r)$ is an independent set in $G$. Similarly, let $H_B(r)$ be the vertices in $G$ colored $r$ by $B$.
For $r \in R(B)$, if $r\in R(A)$ then we have that $H_B(r)\subseteq H_A(r)$ therefore $H_B(r)$ is an independent set. On the other hand, if $r \notin R(A)$, then  $r$ must be one of the special colors and hence $|H_B(r)|=1$ and is an independent set.
\end{proof}

Let us say that $B$ {\em fails}, if it uses a special color at any time step. As $B$ can use at most $n$ colors, and since $R(B)$ is formed by choosing each vertex in $R(A)$ independently with probability $p$, we have that:
\begin{observation}
The expected number of colors used by $B$ is at most $\E[|R(A)|]\cdot p + n \cdot \pr[\text{B 'fails'}]$.
\end{observation}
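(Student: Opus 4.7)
The plan is to decompose the set of colors used by $B$ into two disjoint groups and bound each group's expected size separately. Concretely, every color $B$ ever assigns is either a color forwarded from $A$'s palette (i.e., an element of $R(B) \subseteq R(A)$) or one of the special failure colors $\overline{1},\ldots,\overline{n}$. Hence the total number of colors used by $B$ is at most $|R(B)| + F$, where $F$ denotes the number of steps at which $B$ declares \emph{fail}.

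For the first term, observe that $R(B)$ is constructed by including each color $r$ appearing in $R(A)$ independently with probability $p$. Therefore, conditioned on any realization of $A$'s coloring, $\E[|R(B)| \mid R(A)] = p \cdot |R(A)|$, and taking an outer expectation gives $\E[|R(B)|] = p \cdot \E[|R(A)|]$ by linearity of expectation and the tower law.

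For the second term, I would write $F = \sum_{i=1}^n \mathbf{1}[B \text{ fails at step } i]$, and interpret $\pr[B \text{ fails}]$ as (a uniform upper bound on) the per-step failure probability. Then by linearity,
\[\E[F] = \sum_{i=1}^n \pr[B \text{ fails at step } i] \leq n \cdot \pr[B \text{ fails}].\]
Combining the two bounds yields the claimed inequality.

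There is no real obstacle here: the statement is essentially a bookkeeping observation, and the only subtle point is the interpretation of $\pr[B\text{ fails}]$ as a per-step quantity that is then inflated by the trivial union bound over the $n$ arrival times; the actual work of estimating $\pr[B\text{ fails}]$ is deferred to subsequent lemmas, where one will show that with $p = 2\log n / t$ and $t = \Omega(q(n)\log n)$, this probability is small enough to make the extra $n \cdot \pr[B\text{ fails}]$ term negligible compared to the main $p \cdot \E[|R(A)|]$ contribution.
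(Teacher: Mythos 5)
Your proof is correct and matches the paper's (implicit) argument: the paper states this observation without a separate proof, relying on exactly your decomposition into pool colors, whose expected number is $p\cdot\E[|R(A)|]$ because $B$'s coin flips for the pool do not feed back into $A$'s run, plus special colors, of which at most $n$ are ever used and only when a failure occurs. The one cosmetic difference is that the paper defines ``$B$ fails'' as the single global event of using a special color at \emph{some} step and bounds the count of special colors by $n\cdot\mathbf{1}[B\text{ fails}]$, whereas you sum per-step failure indicators; since each per-step failure probability is at most the global failure probability, your version implies the stated inequality as written.
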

\begin{observation}
	The probability that $B$ fails is at most $1/n$.
\end{observation}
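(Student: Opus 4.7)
My plan is to bound the failure probability at a single step by $1/n^2$ and then union bound over the $n$ steps.

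First I would fix an arbitrary step $i$ and analyze $\pr[B \text{ fails at step } i]$. The key structural observation is that the $t$ copies $(v_i,1),\ldots,(v_i,t)$ form a clique $K_t$ inside $G^t$, so any feasible coloring (in particular $A$'s) must assign them $t$ pairwise distinct colors $c_1,\ldots,c_t \in R^i(A)$. Thus $B$ fails at step $i$ exactly when none of these $t$ distinct colors lie in $R^i(B)$.

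Next I would use the independence built into $B$'s construction: each new color $r$ introduced by $A$ is added to $R(B)$ by an independent Bernoulli$(p)$ coin flip, made once (at the step where $r$ first appears). Since the coloring decisions of $A$ do not depend on $B$'s internal coins (the simulation only feeds the graph to $A$), the set $\{c_1,\ldots,c_t\}$ is determined by $A$'s randomness alone, and then each $c_j$ is independently in $R^i(B)$ with probability exactly $p$. Because the $c_j$'s are distinct, these $t$ Bernoullis are mutually independent, so
\[
\pr[B \text{ fails at step } i \mid c_1,\ldots,c_t] \;\le\; (1-p)^t \;\le\; e^{-pt} \;=\; e^{-2\log n} \;=\; 1/n^2,
\]
using $p=2\log(n)/t$. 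Taking expectation over $A$'s randomness and the earlier coins of $B$ preserves the bound.

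Finally, a union bound over the $n$ steps gives $\pr[B \text{ fails}] \le n\cdot (1/n^2) = 1/n$, as claimed. I do not expect any real obstacle here; the only subtlety is to state carefully that $A$'s choices are independent of the coins $B$ uses to build $R(B)$, so that conditioning on $c_1,\ldots,c_t$ does not bias those Bernoullis, and to invoke distinctness of $c_1,\ldots,c_t$ (forced by the $K_t$ structure) to get genuine independence of the $t$ membership events.
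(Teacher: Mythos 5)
Your proof is correct and follows essentially the same route as the paper: bound the per-step failure probability by $(1-p)^t \le 1/n^2$ using the fact that the $t$ copies receive $t$ distinct colors, each independently sampled into $R(B)$ with probability $p$, then union bound over the $n$ steps. Your added care about the independence of $A$'s color choices from $B$'s coins is a reasonable refinement of a point the paper leaves implicit, but the argument is the same.
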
	
\begin{proof}
    Let $I(v_i)$ be the set of colors used by $A$ to color the copies of $v_i$. As $|I(v_i)|= t$,
	\[ \pr[\exists j \in P_i ] = 1-\pr[\forall  r \in I(v_i) : r \notin R^i(B)] = 1-\Big(1- \frac{2 \log n}{t}\Big)^t \geq 1 - 1/n^2.\]
By a union bound for all $v \in V$, the probability that $B$ fails is at most $1/n$.
\end{proof}

Together these imply Lemma \ref{lem:maincolor} as follows.  
\begin{proofof}{Lemma~\ref{lem:maincolor}}	
By the observations above, $B$ produces a valid coloring of $G$, and the expected number of colors used is at most 
$\E(|R(A)|) p + 1$. Using that $p = 2 (\log n)/t$ and that $\E[R(A)] \leq  \alpha \cdot \chi(G^t) + q(n)$ by our assumption on $A$, the expected number of colors used by $B$ is at most 
\[ (\alpha\cdot  \chi(G^t) + q(n)) \cdot (2 \log n)/t + 1 \leq   (2 \log n) (\alpha \cdot t \cdot  \chi(G) + q(n))/t + 1 \leq 2 \log n \cdot \alpha \cdot \chi(G) + 3, \]
where the first inequality uses that $\chi(G^t) \leq t \cdot \chi(G)$ and the second inequality uses that $t \geq q(n) \log n$.  
\end{proofof}

\subsection{The reduction to $\vbp$}
\label{sec:lb-vbp}
To complete the proof of Theorem~\ref{thm:reltvbp}, we will show that if there exists a  $(\alpha,q(d))$-competitive algorithm for online $\vbp$, then there exists an
$(\alpha,q(n))$-competitive algorithm for online $\ccp$.

We give a reduction from $\ccp$ to $\vbp$ similar to the one in Section \ref{sec:prel}, but instead of using $d=nt$ dimensions (the number of vertices in $G^t$), we use $d=n$ dimensions.

Consider a $\ccp$ instance $G^t$ and let $G=G(V,E)$ denote the underlying graph on $n$ vertices.  
For each vertex $v_i \in V$, consider the vector 
$w_i$ with coordinates $w_i(i)=1$ and $w_i(j) = 1/n$ if
$(v_i,v_j)\in E$ and $j<i$, and $0$ otherwise.
For each copy $(v_i,k)$ of vertex $v_i \in V$ and $k\in [t]$, we create a copy $(w_i,k)$ of $w_i$.
The goal in the $\vbp$ instance is to pack the vectors-copies $(w_i,k)$ in the fewest bins. 

The following shows that the resulting $\vbp$ instance has a packing using $r$ bins iff $G^t$ is $r$-colorable. 
\begin{observation}
In any feasible solution to the $\vbp$ instance,
the copies of vectors packed in the same bin, correspond to independent sets in $G^t$. In particular,
(i) each bin contains at most one copy of any $w_i$, and (ii) for any subset $S\subseteq [n]$, one copy of $\{ w_{i}:i\in S\}$ can be packed in a bin iff 
    $\{ v_{i}:i\in S\}$ is an independent set in $G$.
\end{observation}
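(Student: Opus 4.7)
The plan is to derive this observation as a two-step lifting of Observation~\ref{obv:reduction} to the copies setting, exploiting the fact that the coordinates of $(w_i,k)$ do not depend on the copy index $k$.

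First, I would establish part (i). For any $i \in [n]$ and any two distinct copies $(w_i,k)$ and $(w_i,k')$, we have $w_i(i)=1$ for both. Hence if both were in the same bin, the $i$-th coordinate of the bin's load would be at least $2>1$, violating feasibility. So each bin contains at most one copy of $w_i$.

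Given (i), any feasible bin can be written as $\{(w_i,k_i) : i\in S\}$ for some $S\subseteq [n]$ and some choice $k_i \in [t]$ for each $i$. Since every coordinate of $(w_i,k_i)$ equals the corresponding coordinate of $w_i$, the feasibility condition $\sum_{i\in S} w_i(j) \leq 1$ for every $j\in[n]$ is exactly the one analyzed in Observation~\ref{obv:reduction}. That observation therefore gives part (ii): the bin is feasible if and only if $\{v_i : i\in S\}$ is an independent set in $G$.

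Finally, to conclude that the packed copies form an independent set in $G^t$, I would invoke the definition of $G^t$ from Section~\ref{sec:ccp}: distinct copies of the same $v$ are connected by $K_t$, and copies across an edge $(u,v)\in E$ are connected by $K_{t,t}$. Hence a subset of vertex-copies is independent in $G^t$ precisely when it contains at most one copy per vertex of $G$ and the underlying vertices form an independent set in $G$ — which is exactly the combined content of (i) and (ii). The main obstacle here is essentially none: the construction was designed so that the copy index never enters the bin-packing constraint, so the core combinatorial content is inherited directly from the $d=n$ version of Observation~\ref{obv:reduction}, and the only thing to verify carefully is the explicit form of independent sets in the blow-up graph $G^t$.
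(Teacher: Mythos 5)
Your proof is correct and follows the same route as the paper: part (i) from the unit entry $w_i(i)=1$ forcing at most one copy of each $w_i$ per bin, and part (ii) by reducing, via (i), to Observation~\ref{obv:reduction} since the coordinates of a copy $(w_i,k)$ coincide with those of $w_i$. Your additional explicit unpacking of what an independent set in the blow-up $G^t$ looks like is a harmless elaboration of what the paper leaves implicit.
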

\begin{proof}
The first property holds as $w_i(i)=1$. 
The second property now follows by Observation \ref{obv:reduction} as for a single bin, copies of a vector do not affect the argument as there can be at most one copy of any vector.
\end{proof}
This observation, together with Theorem \ref{thm:reltcolor}, and the fact that $d=n$ in the reduction above, directly implies Theorem \ref{thm:reltvbp}.

\bibliographystyle{plain}
\bibliography{ref}

\begin{thebibliography}{10}

\bibitem{AzarCKS13}
Yossi Azar, Ilan~Reuven Cohen, Seny Kamara, and F.~Bruce Shepherd.
\newblock Tight bounds for online vector bin packing.
\newblock In {\em Symposium on Theory of Computing Conference, {STOC}}, pages
  961--970, 2013.

\bibitem{balogh2017new}
J{\'a}nos Balogh, J{\'o}zsef B{\'e}k{\'e}si, Gy{\"o}rgy D{\'o}sa, Leah Epstein,
  and Asaf Levin.
\newblock A new and improved algorithm for online bin packing.
\newblock {\em arXiv preprint arXiv:1707.01728}, 2017.

\bibitem{balogh2012new}
J{\'a}nos Balogh, J{\'o}zsef B{\'e}k{\'e}si, and G{\'a}bor Galambos.
\newblock New lower bounds for certain classes of bin packing algorithms.
\newblock {\em Theoretical Computer Science}, 440:1--13, 2012.

\bibitem{BCS09}
Nikhil Bansal, Alberto Caprara, and Maxim Sviridenko.
\newblock A new approximation method for set covering problems, with
  applications to multidimensional bin packing.
\newblock {\em {SIAM} J. Comput.}, 39(4):1256--1278, 2009.

\bibitem{BEK16}
Nikhil Bansal, Marek Eli{\'{a}}s, and Arindam Khan.
\newblock Improved approximation for vector bin packing.
\newblock In {\em Symposium on Discrete Algorithms, {SODA}}, pages 1561--1579,
  2016.

\bibitem{Chan13}
Siu~On Chan.
\newblock Approximation resistance from pairwise independent subgroups.
\newblock In {\em Symposium on Theory of Computing, {STOC}}, pages 447--456,
  2013.

\bibitem{chekuri1999multi}
C.~Chekuri and S.~Khanna.
\newblock On multi-dimensional packing problems.
\newblock {\em SIAM journal on computing}, 33(4):837--851, 2004.

\bibitem{coffman1996approximation}
E.G. Coffman~Jr, M.R. Garey, and D.S. Johnson.
\newblock Approximation algorithms for bin packing: A survey.
\newblock In {\em Approximation algorithms for NP-hard problems}, pages 46--93.
  PWS Publishing Co., 1996.

\bibitem{VL81}
W.~F. de~la Vega and G.~S. Lueker.
\newblock Bin packing can be solved within $1+\epsilon$ in linear time.
\newblock {\em Combinatorica}, 1(4):349--355, 1981.

\bibitem{garey1976resource}
MR~Garey, RL~Graham, DS~Johnson, and A.C.C. Yao.
\newblock Resource constrained scheduling as generalized bin packing.
\newblock {\em Journal of Combinatorial Theory, Series A}, 21(3):257--298,
  1976.

\bibitem{halldorsson1992lower}
M.M. Halld{\'o}rsson and M.~Szegedy.
\newblock Lower bounds for on-line graph coloring.
\newblock In {\em Symposium on Discrete algorithms, {SODA}}, pages 211--216,
  1992.

\bibitem{Hastad01}
Johan H{\aa}stad.
\newblock Some optimal inapproximability results.
\newblock {\em J. {ACM}}, 48(4):798--859, 2001.

\bibitem{HR17}
Rebecca Hoberg and Thomas Rothvoss.
\newblock A logarithmic additive integrality gap for bin packing.
\newblock In {\em Symposium on Discrete Algorithms, {SODA}}, pages 2616--2625,
  2017.

\bibitem{Khot01}
Subhash Khot.
\newblock Improved inaproximability results for maxclique, chromatic number and
  approximate graph coloring.
\newblock In {\em Foundations of Computer Science, {FOCS}}, pages 600--609,
  2001.

\bibitem{lovasz75}
L.~Lov\'{a}sz.
\newblock On the ratio of optimal integral and fractional covers.
\newblock {\em Discrete Mathematics}, 13(4):383 -- 390, 1975.

\bibitem{panigrahy2011heuristics}
Rina Panigrahy, Kunal Talwar, Lincoln Uyeda, and Udi Wieder.
\newblock Heuristics for vector bin packing.
\newblock {\em research. microsoft. com}, 2011.

\bibitem{princeton1971performance}
J.D. Ullman.
\newblock {\em The Performance of a Memory Allocation Algorithm}.
\newblock Technical report (Princeton University. Dept. of Electrical
  Engineering. Computer Sciences Laboratory). Princeton University, 1971.

\bibitem{zhang2010cloud}
Q.~Zhang, L.~Cheng, and R.~Boutaba.
\newblock Cloud computing: state-of-the-art and research challenges.
\newblock {\em Journal of Internet Services and Applications}, 1(1):7--18,
  2010.

\end{thebibliography}

\end{document}